\documentclass{lmcs} 

\keywords{Knowledge element, description logic, knowledge representation}

\usepackage{hyperref}
\theoremstyle{plain} 


\begin{document}

\title[An extended description logic system with knowledge element based on ALC]{An extended description logic system with knowledge element based on ALC \texorpdfstring{\MakeLowercase{\texttt{}}}{}\rsuper*\\}
\titlecomment{{\lsuper*} Partially supported by National Nature Science Fund Project (No. 61661051 and 61562093), Key Project of Applied Basic Research Program of Yunnan Province (No. 2016FA024).}

\author[Bin Wen]{Bin Wen$^{1}$}	
\address{\small 1.School of Information Science and Technology, Yunnan Normal University, Kunming 650500, China }	
\email{wenbin@ynnu.edu.cn1}  

\author[Jianhou Gan]{Jianhou Gan$^{2}$}	
\address{\small 2.Key Laboratory of Educational Informatization for Nationalities Ministry of Education,Yunnan Normal University,Kunming 650500, China}	

\author[Juan L.G. Guirao]{Juan L.G. Guirao$^{3}$}	
\address{\small 3.Departamento de Matem\'{a}tica Aplicada y Estad\'{\i}stica.Universidad Polit\'{e}cnica de Cartagena,Hospital de Marina, 30203-Cartagena, Regi\'{o}n de Murcia, Spain}	

\author[Wei Gao]{Wei Gao$^{1}$}	
\address{\small 1.School of Information Science and Technology, Yunnan Normal University, Kunming 650500, China}	




\begin{abstract}
  With the rise of knowledge management and knowledge economy, the knowledge elements that directly link and embody the knowledge system have become the research focus and hotspot in certain areas. The existing knowledge element representation methods are  limited in functions to deal with the formality, logic and reasoning. Based on description logic ALC and the common knowledge element model, in order to describe the knowledge element, the description logic ALC is expanded. The concept is extended to two different ones (that is, the object knowledge element concept and the attribute knowledge element concept). The relationship is extended to three (that is, relationship between object knowledge element concept and attribute knowledge element concept, relationship among object knowledge element concepts, relationship among attribute knowledge element concepts), and the inverse relationship constructor is added to propose a description logic KEDL system. By demonstrating, the relevant properties, such as completeness, reliability, of the described logic system KEDL are obtained. Finally, it is verified by the example that the description logic KEDL system has strong knowledge element description ability.
\end{abstract}

\maketitle

\section*{Introduction}\label{S:one}

  With the rise of knowledge management and knowledge economy, knowledge has attracted people's attention as an important strategic resource. The direct control and management of knowledge itself has become the focus of attention in various disciplines. As a direct contact and knowledge unit of knowledge system, knowledge elements has become the research focus in certain fields \cite{1}. The study of knowledge elements should be traced back to the concept of data elements proposed by American intelligence scientist Vladimir Slamek in the late 1970s (see \cite{2} for more details). He believed that the knowledge units in the literature can be further decomposed into the smallest independent data elements such as data, formulas, facts, conclusions, etc., in order to better realize knowledge retrieval and knowledge management, and also help to improve the efficiency and precision of knowledge acquisition. Subsequently, domestic and foreign scholars began to conduct in-depth research on knowledge and its applications.

In recent years, the theoretical research on knowledge by scholars at home and abroad has focused on the definition of the concept of knowledge element, the type of knowledge element structure, and the representation of knowledge elements. In terms of the type of knowledge element structure, Wang  constructed a common knowledge element model. He used the system theory method to divide the knowledge element into object knowledge element, attribute knowledge element and relational knowledge element, and abstracted various concrete things in the objective world. The knowledge element model is used in emergency analysis, emergency decision making and management, video content presentation and retrieval, and enterprise knowledge management \cite{3}. In terms of knowledge element representation, the representation of knowledge elements refers to the normalization and description of knowledge elements through symbols, frameworks and models, so that the knowledge elements are symbolized, formalized or modelled for computer processing. Nowadays, the researchers generally use framework-based representations and ontology-based methods to represent knowledge elements. Zhou  et al. \cite{4} used Framework Representation The knowledge element representation method on the XML platform, and the Bacchus paradigm format of the knowledge element frame. Yuan  et al. \cite{5} proposed a unified knowledge representation method based on knowledge element ontology, in which the knowledge element ontology contains five classes: creator, knowledge element, knowledge element abstract, knowledge element description, and history. The knowledge element was used to describe different knowledge units, the knowledge element description used to represent the knowledge element description body, and the history used to record the evolution of knowledge elements. Lu R-Q and others from the Chinese Academy of Sciences believed that knowledge is the basic unit of knowledge representation in ontology, and a large collection of knowledge elements become the material of knowledge engineering \cite{6}. In the representation of knowledge element structure, Gao  et al. \cite{7} put forward the concept of knowledge element, relationship, question triplet representation. Jiang \cite{8} proposed knowledge element name, attribute, attribute description and information interface, and the tuple was represented. Ma \cite{9} proposed the knowledge element concept, knowledge element attribute, knowledge element method and knowledge element relationship group representation. Liu and Wang \cite{10} proposed a six-tuple based (number, navigation, source, type, feature words, content) A method to express knowledge elements. Wen and Wen \cite{11} proposed seven elements: knowledge element name, representation, rule information, operation information, navigation information, superior information and related information. From the perspective of existing knowledge element representation methods, it is not strong in terms of formality, logic and reasoning.

Description Logic (DL) is a formal method based on object-based knowledge representation. It draws on the main idea of KL-ONE and is a decidable subset of the first-order predicate logic. It differs from the first-order predicate logic in that the description logic system can provide decidable reasoning services. An important feature of the description logic is its strong expressiveness and decidability, which ensures that the inference algorithm always stops and returns the correct result. The description logic has a clear model-theoretical mechanism; it is well suited to represent the application domain through conceptual taxonomy and provide useful reasoning services. The traditional description logic ALC with intersection ($\sqcap$), union ($\sqcup$), complement ($\neg$), existing quantifier ($\exists $) and full quantifier ($\forall$) constructor, has a certain potential to be described. As a formal aspect, description logic is based on objects, for knowledge representation has been used in information systems, language understanding, software engineering (see Baader et al. \cite{12}), Semantics Web (see Baader et al. \cite{13}), Knowledge representation (see Calvanese et al. \cite{14}), and Bioinformatics integration (see Schulz \cite{15}) etc. In addition, Niu et al. \cite{16} proposed a description logic based on the knowledge description method in the field of motor fault diagnosis. The logic fault detection reasoning of the built motor fault knowledge base is carried out. To enhance the ability of expressing uncertainty of the knowledge description logic, Chen  et al. \cite{17} proposed a method to extend the uncertainty of the description logic SROIQ(D). Fang et al. \cite{18} determined a new description logic based on the theory of dynamic fuzzy logic, that is Dynamic Fuzzy Description Logic (DFDLs), and used DFDLs to represent the uncertain knowledge of Deep Web and achieved reasonable reasoning and utilization. Based on the common knowledge element model proposed by Wang Yanzhang and description logic ALC, in order to describe the knowledge element, we extend the description logic ALC. The concept is extended to two different ones(that is, object knowledge element concept and attribute knowledge element concept). The relationship is extended to three (that is, relationship between object knowledge element concept and attribute knowledge element concept, relationship among object knowledge element concepts, relationship among attribute knowledge element concepts), and the inverse relationship constructor is added to propose a description logic KEDL system. By demonstrating, the relevant properties, such as completeness, reliability, of the description logic system KEDL are obtained.

\section{Knowledge element model}

It is well known that the knowledge elements are applied to the model domain and gave a common knowledge element model, namely the model knowledge element (\cite{19}, \cite{20}, and \cite{21}). It is stated below:\\
(1) Object knowledge element.

For a model \textit{m}, let $N_{m}$ be the concept of things represented by \textit{m} and its attribute names,  $A_{m}$ is the relevant attribute state set of \textit{m}, $R_{m}$ is the mapping set on $A_{m}\times A_{m}$. Then the knowledge element corresponding to $m$ is:
\begin{equation}\label{1}
K_{m}=(N_{m},A_{m},R_{m}).
\end{equation}
Generally, there are $N_{m}\neq\emptyset$,$A_{m}\neq\emptyset$,$R_{m}\neq\emptyset$.

(2) Attribute knowledge element.

Assume $a\in A_{m}$, its corresponding knowledge element is
\begin{equation}\label{2}
K_{a}=(p_{a},d_{a},f_{a})\quad{\rm for}\quad \forall a \in A_{m},
\end{equation}
where $p_{a}$ is the measurable feature description the attribute state $a$. When $p_{a}=0$, the attribute status $a$ is non-descriptive. When $p_{a}=1$, the attribute status $a$ is descriptive. If $p_{a}=2$, the attribute status $a$ is conventional measurable. If $p_{a}=3$, the attribute status $a$ is random measurable. If $p_{a}=4$, the attribute status $a$ is fuzzy measurable, etc. If $a$ is measurable, then $d_{a}$ is the measure dimension and $f_{a}$ is the change function of $a$£¬there is $p_{a}>0$,$d_{a}\neq\emptyset$, but $f_{a}$ may be empty.

(3) Relational knowledge element.

Assume $r\in R_{m}$, its corresponding knowledge element is
\begin{equation}\label{3}
K_{r}=(p_{r},A^{I}_{r},A^{O}_{r},f_{r})\quad{\rm for}\quad \forall r \in R_{m},
\end{equation}
where $p_{r}$ represents the mapping attribute description of the relationship $r$, which can be logical structure, linear, fuzzy, non-linear, etc. $A^{I}_{r}$ indicates the input attribute state set, $A^{O}_{r}$  indicates the output attribute state set, $f_{r}$ is mapping function, there is
$A^{O}_{r}$=$f_{r}$($A^{I}_{r}$), $p_{r}\neq\emptyset$, $A^{O}_{r}\neq\emptyset$, $A^{I}_{r}\neq\emptyset$ and $f_{r}\neq0$.

Based on the knowledge element model, researchers have applied the research into practice. Chen et al. \cite{22} refined this knowledge element model, and also pointed out that this model characterizes the individual elements and associated methods of the objective things system from three perspectives: the knowledge element itself, the knowledge element attribute and the relationship between the knowledge element attributes. They also proposed the implicit description method of the knowledge element attribute relationship and proved that the knowledge element model is valid with examples. Zhong et al. \cite{23} studied the scenario-related models in emergencies, and established and described element models, conceptual models, and instantiation constraints based on knowledge elements. Based on the above knowledge element model,  Zhong et al. \cite{24} instantiated in the hazard system to obtain the hazard body knowledge element, and gave its model. Chao \cite{25} proposed that the knowledge element can be regarded as the result of the concreteization of the common knowledge element model. Based on the knowledge element, we study the chain reaction in emergencies from the knowledge level. Wang et al. \cite{26} introduced the common knowledge element model to extract common knowledge of emergency management in the field of emergency events, to form a common and scalable emergency knowledge element system, and proposed a situational representation of the emergency management case and its storage model. Yu et al. \cite{27} used knowledge elements to describe the common features of the events, and constructed an emergency risk prediction model based on knowledge elements. Zhang et al. \cite{28} proposed a scenario representation model based on the knowledge element model and the micro-analysis of scenarios factors and their interactions. In view of the uncertainty of information in scenarios inference, a fuzzy inference method was introduced. Sun et al. \cite{29} constructed a knowledge system for competitive intelligence based on the knowledge element, then generated a network with the help of relationship among the attributes of these meta-data, finally identified competitive intelligence through similarity analysis and merging multi-attributes.

\section{Extended description logic KEDL}

\subsection{Propose description logic KEDL}

The traditional description logic ALC is composed of intersection ($\sqcap$), union ($\sqcup$), complement ($\neg$), exist quantifier ($\exists$) and full quantifier ($\forall$) constructor. It has a certain ability to describe knowledge. Using description logic ALC to describe knowledge elements, instance of concept is the instance of the object knowledge element, that is, only when the concept C is interpreted as a set of object knowledge element instances and a is interpreted as an instance of the object knowledge element, $C(a)$ is meaningful in description logic ALC. There is now a fact that there are two concepts in the concept description with knowledge element: (1) The concept whose instance is the thing object. (2) The concept whose instance is the attribute value. For example, in Chen et al. \cite{22}, based on the knowledge element system, the knowledge element building was extracted from the debris flow disaster, which is described as: building {location, state, function, number of people, impact resistance, shape factor, impact force}, according to the case after instantiation in the literature, instances of object knowledge element buildings include buildings 1, buildings 2, instances of attribute locations include (120,40), (120,40,1), the instances of the attribute state include normal, damage, be damaged, instances of attribute impact resistance include $1000N/m^{2}$, $1200N/m^{2}$. The other examples are not listed one by one. In addition, we define a building concept with impact resistance less than $1200N/m^{2}$ (object knowledge element concept), all instances of this concept have an impact resistance property of less than $1200 N/m^{2}$, but who is the impact resistance of less than $1200N/m^{2}$ is a concept in the field of attribute impact resistance (attribute knowledge element concept). In addition, if the gas accident has the attribute of the number of casualties, the relationship between the knowledge element gas accident and the attribute knowledge element of casualty can be expressed as a relationship existing. In turn, the relationship between the attribute knowledge element casualty and the knowledge element gas accident is expressed as a relationship existed. The relationship between existing and existed is inverse relation each other. In order to be able to describe the knowledge element, we extend the description logic ALC, in which the concept is divided into two types of concepts by a kind of concept extension, the relationship is divided into three types, and the inverse relationship constructor is added to propose the description logic KEDL. The syntax and semantics for description logic KEDL are defined as follows and the relevant properties are elaborated.

\subsection{The syntax and semantics of description logic KEDL}
In description logic KEDL, let \textrm{\textbf{R}} be the relationship set between the object knowledge element and the attribute knowledge element. \textrm{\textbf{P}} represents the relationship set between object knowledge elements. \textrm{\textbf{Q}} represents the relationship set between attribute knowledge elements, \textbf{$\Phi$} represents the object knowledge element concept set, and \textbf{$\Omega$}  represents the attribute knowledge element concept set. We define the relationship between the object knowledge element and the attribute knowledge element, the relationship among the object knowledge elements, the relationship among the attribute knowledge elements, the object knowledge element concept, and the attribute knowledge element concept to satisfy the minimum set of definition 1. The syntax definition of description logic KEDL is given as follows.

\begin{defi} {\rm (Syntax)} The syntax of KEDL is stated as follows:\\
{\rm(1)} if $C, D\in\textbf{$\Phi$}$, then $\top, \bot, \neg$C$, $C$\sqcap$D$, $C$\sqcup$D$\in\textbf{$\Phi$}$.\\
{\rm(2)} if $p\in\textrm{\textbf{P}}, C\in\textbf{$\Phi$}$, then $\forall$p.C$, \exists$p.C$\in\textbf{$\Phi$}$.\\
{\rm(3)} if $r\in\textrm{\textbf{R}}, A\in\textbf{$\Omega$}$, then $\forall$ r.A$,\exists$r.A$\in\textbf{$\Omega$}$.\\
{\rm(4)} if $A,B\in\textbf{$\Omega$}$, then $\neg$A$, $A$\sqcap$B$, $A$\sqcup$B$\in\textbf{$\Omega$}$.\\
{\rm(5)} if $q\in\textrm{\textbf{Q}}, A\in\textbf{$\Omega$}$, then $\forall$q.A$,\exists$q.A$\in\textbf{$\Omega$}$.\\
{\rm(6)} if $r\in\textrm{\textbf{R}}, C\in\textbf{$\Omega$}$, then $\forall$r-.C$,\exists$r-.C$\in\textbf{$\Omega$}$.\\
{\rm(7)} if $r\in\textrm{\textbf{R}}$, then $r^{-}\in\textrm{\textbf{R}}$.\\
{\rm(8)} if $C,D\in\textbf{$\Phi$}$, then $C\rightarrow$D$, $C$\leftrightarrow$D$\in\textbf{$\Phi$}$.\\
{\rm(9)} if $A,B\in\textbf{$\Omega$}$, then $A\rightarrow$B$, $A$\leftrightarrow$B$\in\textbf{$\Omega$}$.\\
\end{defi}

\begin{defi} {\rm (Model)} KEDL model {\rm(}denoted as M{\rm)} is $(\Delta^{I},\Sigma^{I},\bullet^{I})$, where $\Delta^{I}$, $\Sigma^{I}$ is respective non-empty domain, $\bullet^{I}$ is interpret function. $\bullet^{I}$ maps the object knowledge element instance $c$ to $c^{I}$ and $c^{I}\in \Delta^{I}$, maps the object knowledge element concept $C$ to $C^{I}$ and $C^{I}\subseteq \Delta^{I}$, maps the relationship $p$ between object knowledge elements to $p^{I}$ and $p^{I}\in \Delta^{I} \times \Delta^{I}$, maps the attribute knowledge element instance $a$ to $a^{I}$ and $a^{I}\in \Sigma^{I}$, maps attribute knowledge element concept $A$ to $A^{I}$ and $A^{I}\subseteq \Sigma^{I}$, maps the relationship $q$ between attribute knowledge elements to $q^{I}$ and $q^{I}\in \Sigma^{I}\times \Sigma^{I}$, maps the relationship $r$ to $r^{I}$ and $r^{I}\in \Delta^{I} \times \Sigma^{I}$, and satisfy the condition: for any relationship $r$ and any $x\in \Delta^{I}$, ${y:(x,y)\in r^{I}}$ contains only one element.
\end{defi}

After defining the description logic KEDL model, the semantic interpretation of description logic KEDL is given below.
\begin{defi} {\rm(Semantics)} The semantics of the description logic KEDL is represented as follows:\newline
{\rm(1)} $\top ^{I}=\Delta^{I}=\{x|x\in(C \sqcup \neg C)^{I}\}=\{x|x\in C^{I} \cup (\neg C)^{I}\}=\{x|x\in C^{I}$ or$\  x\notin C^{I}\}$.\newline
{\rm(2)} $\bot ^{I}=\emptyset=\{x|x\in(C \sqcap \neg C)^{I}\}=\{x|x\in C^{I} \cap (\neg C)^{I}\}=\{x|x\in C^{I}$ and$\ x\notin C^{I}\}$.\newline
{\rm(3)} $M\models C(c),iff\ c^{I} \in C^{I}$.\newline
{\rm(4)} $(\neg C)I=\Delta^{I}\setminus C^{I}=\{x|x\in \Delta^{I}$ and $\ x\notin C^{I}\}$.\newline
{\rm(5)} $(C\sqcup D)^{I}=C^{I}\cup D^{I}=\{x\in\Delta^{I}|x\in C^{I}$ or $\ x\in D^{I}\}$.\newline
{\rm(6)} $(C\sqcap D)^{I}=C^{I}\cap D^{I}=\{x\in\Delta^{I}|x\in C^{I}$ and $\ x\in D^{I}\}$.\newline
{\rm(7)} $(\forall p.C)^{I}=\{x\in\Delta^{I}|\forall y\in\Delta^{I}((x,y)\in p^{I}\Rightarrow y\in C^{I})\}=\{x\in\Delta^{I}|\forall y$\ if$\ (x,y)\in p^{I}$ then$\ y\in C^{I}\}$.\newline
{\rm(8)} $(\exists p.C)^{I}=\{x\in\Delta^{I}|\exists y \in \Delta^{I} ((x,y)\in p^{I} \& y\in C^{I})\}=\{x\in\Delta^{I}|\exists y\ makes\ (x,y)\in p^{I} and\ y\in C^{I}\}$.\newline
{\rm(9)} $(\neg A)I=\Sigma^{I}\setminus A^{I}=\{x|x\in \Sigma^{I} and\ x\notin A^{I}\}$.\newline
{\rm(10)} $(A\sqcup B)^{I}=A^{I}\cup B^{I}=\{u\in \Sigma^{I}|u\in A^{I} or\ u\in B^{I}\}$.\newline
{\rm(11)} $(A\sqcap B)^{I}=A^{I}\cap B^{I}=\{u\in \Sigma^{I}|u\in A^{I} and\ u\in B^{I}\}$.\newline
{\rm(12)} $(\forall q.A)^{I}=\{u\in \Sigma^{I}|\forall v \in\Sigma^{I} ((u,v)\in q^{I}\Rightarrow v\in A^{I})\}=\{u\in \Sigma^{I}|\forall v\ if\ (u,v)\in q^{I} then v\in A^{I}\}$.\newline
{\rm(13)} $(\exists q.A)^{I}=\{u \in \Sigma^{I}|\exists v \in \Sigma^{I} ((u,v)\in q^{I}\& v\in A^{I})\}=\{u\in \Sigma^{I}|\exists v\ makes\ (u,v)\in q^{I} and\ v\in A^{I}\}$.\newline
{\rm(14)} $(r-)^{I} =\{(u,x)|(x,u)\in r^{I}\}$.\newline
{\rm(15)} $(\forall r.A)^{I}=\{x\in \Delta^{I}|\forall u\ \in\Sigma^{I} ((x,u)\in r^{I}\Rightarrow u\in A^{I})\}=\{x\in \Delta^{I}|\forall u\ if\ (x,u)\in r^{I} then\ u\in A^{I}\}$.\newline
{\rm(16)} $(\exists r.A)^{I}=\{x \in \Delta^{I}|\exists u \in \Sigma^{I} ((x,u)\in r^{I}\& u\in A^{I})\}=\{x\in \Delta^{I}|\exists u\ makes\ (x,u)\in r^{I} and\ u\in A^{I}\}$.\newline
{\rm(17)} $(\forall r-.C)^{I}=\{u\in \Sigma^{I}|\forall x\ \in \Delta^{I}((u,x)\in (r-)^{I}\Rightarrow x\in C^{I})\}=\{u\in \Sigma^{I}|\forall x\ \in \Delta^{I} if (u,x)\in (r-)^{I} then\ x\in C^{I}\}$.\newline
{\rm(18)} $(\exists r-.C)^{I}=\{u \in \Sigma^{I}|\exists x \in \Delta^{I} ((u,x)\in (r-)^{I}\& c\in C^{I})\}=\{u\in \Sigma^{I} |\exists x\in \Delta^{I}\ makes\ (u,x)\in (r-)^{I} and\ x\in C^{I}\}$.\newline
{\rm(19)} $M\models r(c,d)$, iff $(c^{I},d^{I})\in r^{I}$.\newline
{\rm(20)} $M\models A(a)$, iff $a^{I}\in A^{I}$.\newline
{\rm(21)} $M\models p(c,d)$, iff $(c^{I},d^{I})\in p^{I}$.\newline
{\rm(22)} $M\models q(c,d)$, iff $(c^{I},d^{I})\in q^{I}$.\newline
{\rm(23)} $M\models C\rightarrow D$, iff $\exists x \in \Delta^{I}$, if$\ x \in C^{I}$ then$\ x\in D^{I}$.\newline
{\rm(24)} $M\models C\leftrightarrow D$, iff $\exists x \in \Delta^{I}$, if$\ x \in C^{I}$ then$\ x\in D^{I}$, and if $x\in D^{I} then\ x\in C^{I}$.\newline
{\rm(25)} $M\models A\rightarrow B$, iff $\exists u \in \Sigma^{I}$, $if\ u \in A^{I}$ then$\ u\in B^{I}$.\newline
{\rm(26)} $M\models A\leftrightarrow B$, iff $\exists u \in \Sigma^{I}$, if$\ u \in A^{I}$ then $\ u\in B^{I}$, and if $u \in B^{I}$ then$\ u\in A^{I}$.
\end{defi}

\subsection{The formal axioms of KEDL}
To further explore the description logic KEDL properties, we will study KEDL from an axiom perspective. The set of formulas described by the description logic KEDL is extended by the following axioms.\newline
{\bf Axiom 1.} $\vdash \neg\phi\sqcup(\neg\psi\sqcup\phi)$. \newline
Equivalent representation $\vdash \phi \rightarrow(\psi\rightarrow\phi)$.\newline
{\bf Axiom 2.} $\vdash \neg(\neg\phi\sqcup(\neg\psi\sqcup\gamma))\sqcup(\neg(\neg\phi\sqcup\psi)\sqcup(\neg\phi\sqcup\gamma))$.\newline
Equivalent representation $\vdash(\phi\rightarrow(\psi\rightarrow\gamma))\rightarrow((\phi\rightarrow\psi)\rightarrow((\phi\rightarrow\gamma ))$.\newline
{\bf Axiom 3.} $\vdash\neg(\phi\sqcup\neg\psi)\sqcup(\neg\psi\sqcup\phi)$.\newline
Equivalent representation $\vdash(\neg\phi\rightarrow\neg\psi)\rightarrow(\psi\rightarrow\phi)$.\newline
{\bf Axiom 4.} $\vdash\neg(\exists p.C\sqcup\exists p.D)\sqcup \exists p.(C\sqcup D)$.\newline
Equivalent representation $\vdash\exists p.C\sqcup\exists p.D\rightarrow\exists p.(C\sqcup D)$.\newline
{\bf Axiom 5.} $\vdash\neg\exists p.(\neg C\sqcup\neg D)\sqcup \neg(\neg\exists p.C\sqcup\neg\exists p.D)$.\newline
Equivalent representation $\vdash \exists p.(C\sqcap D)\rightarrow \exists p.C\sqcap\exists p.D$.\newline
{\bf Axiom 6.} $\vdash(\neg\exists p.C\sqcup\neg\exists p.\neg D)\sqcup\exists p. \neg(\neg C\sqcup\neg D)$.\newline
Equivalent representation $\vdash(\exists p.C\sqcap \forall p.D)\rightarrow\exists p.(C\sqcap D)$.\newline
{\bf Axiom 7.} $\vdash\neg(\exists q.A\sqcup\exists q.B)\sqcup \exists q.(A\sqcup B)$.\newline
Equivalent representation $\vdash\exists q.A\sqcup\exists q.B\rightarrow\exists q.(A\sqcup B)$.\newline
{\bf Axiom 8.} $\vdash\neg\exists q.(\neg A\sqcup\neg B)\sqcup\neg(\neg\exists q.A\sqcup\neg\exists q.B)$.\newline
Equivalent representation $\vdash\exists q.(A\sqcap B)\rightarrow\exists q.A\sqcap\exists q.B$.\newline
{\bf Axiom 9.} $\vdash(\neg\exists q.A\sqcup\exists q.\neg B)\sqcup\exists q.\neg(\neg A\sqcup\neg B)$.\newline
Equivalent representation $\vdash(\exists q.A\sqcap\forall q.B)\rightarrow\exists q.(A\sqcap B)$.\newline
{\bf Axiom 10.} $\vdash\neg(\exists r.A\sqcup\exists r.B)\sqcup\exists r.(A\sqcup B)$.\newline
Equivalent representation $\vdash\exists r.A\sqcup\exists r.B\rightarrow\exists r.(A\sqcup B)$.\newline
{\bf Axiom 11.} $\vdash\neg \exists r.(\neg A\sqcup\neg B)\sqcup\neg(\neg\exists r.A\sqcup\neg\exists r.B)$.\newline
Equivalent representation $\vdash\exists r.(A\sqcap B)\rightarrow\exists r.A\sqcap\exists r.B$.\newline
{\bf Axiom 12.} $\vdash(\neg\exists r.A\sqcup\exists r.\neg B)\sqcup\exists r.\neg(\neg A\sqcup\neg B)$.\newline
Equivalent representation $\vdash(\exists r.A\sqcap\forall r.B)\rightarrow\exists r.(A\sqcap B)$.\newline
{\bf Axiom 13.} $\vdash\neg(\exists r^{-}.C\sqcup\exists r^{-}.D)\sqcup\exists r^{-}.(C\sqcup D)$.\newline
Equivalent representation $\vdash\exists r^{-}.C\sqcup\exists r^{-}.D\rightarrow\exists r^{-}.(C\sqcup D)$.\newline
{\bf Axiom 14.} $\vdash\neg \exists r^{-}.(\neg C\sqcup\neg D)\sqcup\neg(\neg\exists r^{-}.C\sqcup\neg\exists r^{-}.D)$.\newline
Equivalent representation $\vdash\exists r^{-}.(C\sqcap D)\rightarrow\exists r^{-}.C\sqcap\exists r^{-}.D$.\newline
{\bf Axiom 15.} $\vdash(\neg\exists r^{-}.C\sqcup\exists r^{-}.\neg D)\sqcup\exists r^{-}.\neg(\neg C\sqcup\neg D)$.\newline
Equivalent representation $\vdash(\exists r^{-}.C\sqcap\forall r^{-}.D)\rightarrow\exists r^{-}.(C\sqcap D)$.\newline
{\bf Axiom 16.} $\vdash\neg\exists r^{-}.\forall r.A\sqcup A$.\newline
Equivalent representation $\vdash\exists r^{-}.\forall r.A\rightarrow A$.\newline
{\bf Axiom 17.} $\vdash\neg\exists r.\forall r^{-}.C\sqcup C$.\newline
Equivalent representation $\vdash\exists r.\forall r^{-}.C\rightarrow C$.\newline
{\bf Axiom 18.} If\ $\vdash\phi(c)$,and\ $\vdash \phi\rightarrow\psi$,then\ $\vdash\psi(c)$.\newline
{\bf Axiom 19.} If\ $\vdash\phi\rightarrow\psi$,and\ $\vdash\psi\rightarrow\phi$,then\ $\vdash\phi\leftrightarrow\psi$.\newline
{\bf Axiom 20.} If\ $\vdash\phi\rightarrow\psi$,and\ $\vdash\psi\rightarrow\gamma$,then\ $\vdash\phi\rightarrow\gamma$.\newline
{\bf Axiom 21.} $\vdash\phi\rightarrow\psi\sqcap\gamma$,iff $\vdash\phi\rightarrow\psi$ and\ $\vdash\phi\rightarrow\gamma$.\newline
Here $\phi$, $\psi$ and $\gamma$ are the object knowledge element concepts or the attribute knowledge element concepts at the same time. $C$, $D$ are the concepts of object knowledge element. $A$,$B$ are the concepts of attribute knowledge element. $p$ is the relationship between object knowledge elements. $q$ is the relationship between attribute knowledge elements. $r$ is the relationship between object knowledge element and attribute knowledge element. The specific explanation of the KEDL axiom is stated as follows (among these axioms, Axiom 1, 2, and 3 are axioms in predicate logic):\newline
{\bf Axiom 4.} $\vdash\neg(\exists p.C\sqcup\exists p.D)\sqcup \exists p.(C\sqcup D)$ denotes in domain $\Delta^{I}$. If there is an object knowledge element concept, the instance individual of it has a relationship $p$ with the instance individual of the object knowledge element concept $C$, or the instance individual of the object knowledge element concept has relationship $p$ with the instance individual of the object knowledge element concept $D$, then there is an object knowledge element concept, the instance individual of it has relationship $p$ with the instance individual of the object knowledge element concept $C$ or $D$.\newline
{\bf Axiom 5.} $\vdash\neg\exists p.(\neg C\sqcup\neg D)\sqcup \neg(\neg\exists p.C\sqcup\neg\exists p.D)$ denotes in domain $\Delta^{I}$. If there is an object knowledge element concept and its instance individual has a relationship $p$ with the same instance individual of the object knowledge element concepts $C$ and $D$, then there is an object knowledge element concept whose instance individual has relationship $p$ with the instance individual of $C$, and the instance individual of the object knowledge element concept has relationship $p$ with the instance individual of the object knowledge element concept $D$.\newline
{\bf Axiom 6.} $\vdash(\neg\exists p.C\sqcup\neg\exists p.\neg D)\sqcup\exists p. \neg(\neg C\sqcup\neg D)$ denotes in domain $\Delta^{I}$. If there is an object knowledge element concept,its instance individual has a relationship $p$ with the instance individual of the object knowledge element concept $C$, and the object knowledge element concept whose instance individual has relationship $p$ with some instance individuals (these instance individuals belong to the object knowledge element concept $D$),then the object knowledge element concept has relationship $p$ with the same instance individual of the object knowledge element concepts $C$ and $D$.\newline
{\bf Axiom 7.} $\vdash\neg(\exists q.A\sqcup\exists q.B)\sqcup \exists q.(A\sqcup B)$ denotes in domain $\Sigma^{I}$. If there is an attribute knowledge element concept, its instance individual has a relationship $q$ with the instance individual of the attribute knowledge element concept $A$. Or if there is an attribute knowledge element concept, and its instance individual has a relationship $q$ with the attribute knowledge element concept $B$, then there is an attribute knowledge element concept whose instance individual has a relationship $q$ with the instance individual of the attribute knowledge element concept $A$ or $B$.\newline
{\bf Axiom 8.} $\vdash\neg\exists q.(\neg A\sqcup\neg B)\sqcup\neg(\neg\exists q.A\sqcup\neg\exists q.B)$ denotes in domain $\Sigma^{I}$. If the attribute knowledge element concept has a relationship $q$ with the same instance individual of the attribute knowledge element concept $A$ and $B$, then there is an attribute knowledge element concept whose instance individual has relationship $q$ with the instance individual of the attribute knowledge element concept $A$. There is also an attribute knowledge element concept whose instance individual has relationship $q$ with the instance individual of the attribute knowledge element $B$.\newline
{\bf Axiom 9.} $\vdash(\neg\exists q.A\sqcup\exists q.\neg B)\sqcup\exists q.\neg(\neg A\sqcup\neg B)$ denotes in domain $\Sigma^{I}$. If there is an attribute knowledge element concept, its instance individual has a relationship $q$ with the instance individual of the attribute knowledge element concept $A$, and there is an attribute knowledge element concept, its instance individual has relationship $q$ with any instance individuals (these instance individuals belong to the attribute knowledge element concept $B$), then there is an attribute knowledge element concept whose instance individual has relationship $q$ with the same instance individual of the attribute knowledge element concept A and $B$.\newline
{\bf Axiom 10.} $\vdash\neg(\exists r.A\sqcup\exists r.B)\sqcup\exists r.(A\sqcup B)$ denotes in the domains $\Delta^{I}$ and $\Sigma^{I}$. If there is an object knowledge element concept whose instance individual has a relationship $r$ with the instance individual of the attribute knowledge element concept $A$ or there is an object knowledge element concept whose instance individual has relationship $r$ with the instance individual of the attribute knowledge element concept $B$, then the instance individual of the object knowledge element concept has a relationship $r$ with the instance individual of the attribute knowledge element concept $A$ or the instance individual of the attribute knowledge element concept $B$.\newline
{\bf Axiom 11.} $\vdash\neg \exists r.(\neg A\sqcup\neg B)\sqcup\neg(\neg\exists r.A\sqcup\neg\exists r.B)$ denotes in the domains $\Delta^{I}$ and $\Sigma^{I}$. If the object knowledge element concept whose instance individual has relationship $r$ with the same instance individual of the attribute knowledge element concepts $A$ and $B$, then the object knowledge element concept whose instance individual has relationship $r$ with the instance individual of the attribute knowledge element concepts $A$, and the object knowledge element concept whose instance individual has relationship $r$ with the instance individual of the attribute knowledge element concepts $B$. \newline
{\bf Axiom 12.} $\vdash(\neg\exists r.A\sqcup\exists r.\neg B)\sqcup\exists r.\neg(\neg A\sqcup\neg B)$ denotes in the domains $\Delta^{I}$ and $\Sigma^{I}$. If there is an object knowledge element concept whose instance individual has a relationship $r$ with the instance individual of the attribute knowledge element concept $A$, and there is an object knowledge element concept whose instance individuals have relationship $r$ with any instance individuals (these instance individuals are all instance individuals of the attribute knowledge element concept $B$),then there is an object knowledge element concept whose instance individual has relationship $r$ with the same instance individual of the attribute knowledge element concepts $A$ and $B$.\newline
{\bf Axiom 13.} $\vdash\neg(\exists r^{-}.C\sqcup\exists r^{-}.D)\sqcup\exists r^{-}.(C\sqcup D)$ denotes in the domains $\Delta^{I}$ and $\Sigma^{I}$. If there is an attribute knowledge element concept whose instance individual has a relationship $r^{-}$ with the instance individual of the object knowledge element concept $C$, or there is an attribute knowledge element concept whose instance individual has a relationship $r^{-}$ with the instance individual of the object knowledge element concept $D$, then there is an attribute knowledge element concept whose instance individual has a relationship $r^{-}$ with the instance individual of the object knowledge element concept $C$ or $D$.\newline
{\bf Axiom 14.} $\vdash\neg \exists r^{-}.(\neg C\sqcup\neg D)\sqcup\neg(\neg\exists r^{-}.C\sqcup\neg\exists r^{-}.D)$ denotes in the domains $\Delta^{I}$ and $\Sigma^{I}$. If there is an attribute knowledge element concept whose instance individual has a relationship $r^{-}$ with the same instance individual of the object knowledge element concepts $C$ and $D$, then there is an attribute knowledge element concept whose instance individual has relationship $r^{-}$ with the instance individual of the object knowledge element concept $C$, and there is also an attribute knowledge element concept whose instance entity has a relationship $r^{-}$ with the instance individual of the object knowledge element concept $D$.\newline
{\bf Axiom 15.} $\vdash(\neg\exists r^{-}.C\sqcup\exists r^{-}.\neg D)\sqcup\exists r^{-}.\neg(\neg C\sqcup\neg D)$ denotes in the domains $\Delta^{I}$ and $\Sigma^{I}$. If there is an attribute knowledge element concept whose instance individual has relationship $r^{-}$ with the instance individual of the object knowledge element concept $C$, and there is an attribute knowledge element concept whose instance individuals have relationship $r^{-}$ with any instance individuals (these instance individuals belong to the object knowledge element concept $D$), then there is an attribute knowledge element concept whose instance individual has relationship $r^{-}$ with the same instance individual of the object knowledge element concept $C$ and $D$.\newline
{\bf Axiom 16.} $\vdash\neg\exists r^{-}.\forall r.A\sqcup A$ denotes in domains $\Delta^{I}$ and $\Sigma^{I}$. If there is an attribute knowledge element concept whose instance individual has relation $r^{-}$ with some instance individuals, and these instance individuals have relation $r$ with any instance individuals that belong to the attribute knowledge element concept $A$, then the attribute knowledge element concept $A$ can be satisfied.\newline
{\bf Axiom 17.} $\vdash\neg\exists r.\forall r^{-}.C\sqcup C$ denotes in domains $\Delta^{I}$ and $\Sigma^{I}$. If there is an object knowledge element concept whose instance individual has relation $r$ with some instance individuals, and these instance individuals have relation $r^{-}$ with any instance individuals that belong to the object knowledge element concept $C$, then the object knowledge element concept $C$ can be satisfied.\newline
{\bf Axiom 18.} If\ $\vdash\phi(c)$, and\ $\vdash \phi\rightarrow\psi$, then\ $\vdash\psi(c)$ denotes in domains $\Delta^{I}$, an instance individual $c$ belongs to the object knowledge element concept $\phi$, and the object knowledge element concept $\phi$ contains the object knowledge element concept $\psi$, then the instance individual $c$ also belongs to the object knowledge element concept $\psi$. Or in domain $\Sigma^{I}$, an instance individual $c$ belongs to the attribute knowledge element concept $\phi$, and the attribute knowledge element concept $\phi$ contains the attribute knowledge element concept $\psi$, then the instance individual $c$ also belongs to the attribute knowledge element concept $\psi$.\newline
{\bf Axiom 19.} If\ $\vdash\phi\rightarrow\psi$, and\ $\vdash\psi\rightarrow\phi$, then\ $\vdash\phi\leftrightarrow\psi$ denotes in domain $\Delta^{I}$, the object knowledge element concept $\phi$ implies the object knowledge element concept $\psi$ and the object knowledge element concept $\psi$ also implies the object knowledge element concept $\phi$, then the object knowledge element concept $\phi$ is equivalent to the object knowledge element concept $\psi$. Or in domain $\Sigma^{I}$, the attribute knowledge element concept $\phi$ implies the attribute knowledge element concept $\psi$ and the attribute knowledge element concept $\psi$ also implies the attribute knowledge element concept $\phi$, then the attribute knowledge element concept $\phi$ is equivalent to the attribute knowledge element concept $\psi$.\newline
{\bf Axiom 20.} If\ $\vdash\phi\rightarrow\psi$,and\ $\vdash\psi\rightarrow\gamma$, then\ $\vdash\phi\rightarrow\gamma$ denotes in domain $\Delta^{I}$, the object knowledge element concept $\phi$ implies the object knowledge element concept $\psi$, and the object knowledge element concept $\psi$ implies the object knowledge element concept $\gamma$,then object knowledge element concept $\phi$ implies the object knowledge element concept $\gamma$. Or in domain $\Sigma^{I}$, the attribute knowledge element concept $\phi$ implies the attribute knowledge element concept $\psi$, and the attribute knowledge element concept $\psi$ implies the attribute knowledge element concept $\gamma$, then attribute knowledge element concept $\phi$ implies the attribute knowledge element concept $\gamma$.\newline
{\bf Axiom 21.} $\vdash\phi\rightarrow\psi\sqcap\gamma$, if $\vdash\phi\rightarrow\psi$ and\ $\vdash\phi\rightarrow\gamma$ denotes in domain $\Delta^{I}$, the object knowledge element concept $\phi$ implies the intersection of the object knowledge element concept $\psi$ and $\gamma$, if and only if the object knowledge element concept $\phi$ implies the object knowledge element concept $\psi$ and the object knowledge element concept $\phi$ implies the object knowledge element concept $\gamma$. Or in domain $\Sigma^{I}$, the attribute knowledge element concept $\phi$ implies the intersection of the attribute knowledge element concept $\psi$ and $\gamma$, if and only if the attribute knowledge element concept $\phi$ implies the attribute knowledge element concept $\psi$ and the attribute knowledge element concept $\phi$ implies the attribute knowledge element concept $\gamma$.\newline

\subsection{The property of KEDL}
By demonstrating, the description logic KEDL system satisfies the deduction theorem, the transposition law,the counter-evidence law and the law of absurdity. It also has the following properties:\newline
{\bf Property 1.} 1) $\phi\sqcap\phi\leftrightarrow\phi$;      2)$\phi\sqcup\phi\leftrightarrow\phi$. (Idempotent law)\newline
{\bf Property 2.} 1)$\phi\sqcap\psi\leftrightarrow\psi\sqcap\phi$;    2)$\phi\sqcup\psi\leftrightarrow\psi\sqcup\phi$.  (Commutation law)\newline
{\bf Property 3.} 1)$(\phi\sqcap\psi)\sqcap\gamma\leftrightarrow\phi\sqcap(\psi\sqcap\gamma)$;
2)$(\phi\sqcup\psi)\sqcup\gamma\leftrightarrow\phi\sqcup(\psi\sqcup\gamma)$.      (Combination law)\newline
{\bf Property 4.} 1)$\phi\sqcup(\psi\sqcap\gamma)\leftrightarrow(\phi\sqcup\psi)\sqcap(\phi\sqcup\gamma)$;
2)$\phi\sqcap(\psi\sqcup\gamma)\leftrightarrow(\phi\sqcap\psi)\sqcup(\phi\sqcap\gamma)$.       (Distribution law)\newline
{\bf Property 5.} 1)$\phi\sqcup\bot\leftrightarrow\phi$;       2)$\phi\sqcap\top\leftrightarrow\phi$.     (Identity law)\newline
{\bf Property 6.} 1)$\phi\sqcup\top\leftrightarrow\top$;                   2)$\phi\sqcap\bot\leftrightarrow\bot$. \newline
{\bf Property 7.} $\neg\phi\sqcup\phi\leftrightarrow\top$.   (Exclude-middle law)\newline
{\bf Property 8.} $\phi\sqcap\neg\phi\leftrightarrow\bot$.   (Contradiction Law)\newline
{\bf Property 9.} 1)$\phi\sqcup(\phi\sqcap\psi)\leftrightarrow\phi$;       2)$\phi\sqcap(\phi\sqcup\psi)\leftrightarrow\phi$.     (Absorption law)\newline
{\bf Property 10.} 1)$\neg(\phi\sqcap\psi)\leftrightarrow\neg\phi\sqcup\neg\psi$; \newline
2)$\neg(\phi\sqcup\psi)\leftrightarrow\neg\phi\sqcap\neg\psi$.    (De. Morgan law) \newline
{\bf Property 11.} 1)$\neg\bot\leftrightarrow\top$;     2)$\neg\top\leftrightarrow\bot$.   (Over-complement law)\newline
{\bf Property 12.} $\neg\neg\phi\leftrightarrow\phi$;     (Double-negative law)\newline

In the following, we give the proof process for the first property of property 1, and the proof process of other properties are not listed one by one because of the length limitations of the paper.\newline

The proof of Property 1 is as follows.
\begin{proof}
To prove $\phi\sqcap\phi\leftrightarrow\phi$, is to prove $\phi\sqcap\phi\rightarrow\phi$ and $\phi\rightarrow\phi\sqcap\phi$.

For the first part, it needs to check $\neg(\phi\rightarrow\neg\phi)\rightarrow\phi$ as follows: \newline
(1) $\neg\phi\rightarrow(\phi\rightarrow\neg\phi)$\ \ \ Axiom 1\newline
(2) $(\neg\phi\rightarrow(\phi\rightarrow\neg\phi))\rightarrow(\neg(\phi\rightarrow\neg\phi)\rightarrow\neg\neg\phi)$\
    Axiom 3\newline
(3) $(\neg(\phi\rightarrow\neg\phi)\rightarrow\neg\neg\phi)$\ (1),(2), MP\newline
(4) $\neg\neg\phi\rightarrow\phi$\ Double-negative law\newline
(5) $(\neg(\phi\rightarrow\neg\phi)\rightarrow\phi)$\ (3),(4), Axiom 20\newline

Next, we show the second part, and it is to prove $\phi\rightarrow\neg(\phi\rightarrow\neg\phi)$. According to the deductive theorem, to prove ${\phi}\vdash\neg(\phi\rightarrow\neg\phi)$, put $\phi\rightarrow\neg\phi$ as a hypothesis, there will be\newline
(1) ${\phi,\phi\rightarrow\neg\phi}\vdash\phi$; \newline
(2) ${\phi,\phi\rightarrow\neg\phi}\vdash\neg\phi$. \newline
By (1), (2) through the law of absurdity, to get ${\phi}\vdash\neg(\phi\rightarrow\neg\phi)$. 
\end{proof}

\begin{thm}\label{result1}
       All axioms of description logic KEDL system are valid.
\end{thm}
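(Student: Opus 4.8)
The plan is to read \emph{valid} as semantic soundness: an axiom of the form $\vdash\psi$ is valid iff, in every KEDL model $M=(\Delta^{I},\Sigma^{I},\bullet^{I})$ satisfying Definitions 2 and 3, its interpretation exhausts the relevant domain, i.e. $\psi^{I}=\Delta^{I}$ when $\psi\in\Phi$ and $\psi^{I}=\Sigma^{I}$ when $\psi\in\Omega$; an axiom phrased as a rule (18--21) is valid iff it preserves this property. The proof then proceeds axiom by axiom, grouped by the syntactic form of the constructor involved, and each case is a direct unfolding of the semantic clauses of Definition 3 followed by elementary set-theoretic reasoning. Throughout I keep explicit track of which domain each subconcept lives in, since the object concepts are interpreted in $\Delta^{I}$ and the attribute concepts in $\Sigma^{I}$, and the relation $r$ together with its inverse $r^{-}$ is the only constructor that crosses between them.

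First I would dispatch Axioms 1--3. Writing $\neg$, $\sqcup$, $\sqcap$ as set complement, union and intersection (clauses (4)--(6) and (9)--(11)) turns each of these into a classical propositional tautology---respectively the weakening schema $\phi\rightarrow(\psi\rightarrow\phi)$, the distribution schema, and contraposition---so that the interpreted set equals the whole domain by the excluded-middle identity of clause (1). These reduce to pure Boolean algebra and carry no quantifier content.

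Next come the quantifier axioms 4--17, which I would treat in four parallel families according to whether the relation is $p$ (clauses (7),(8)), $q$ (clauses (12),(13)), $r$ (clauses (15),(16)) or $r^{-}$ (clauses (17),(18)). Each family exhibits the same three generically valid inclusions: $\exists$ distributes over $\sqcup$ (Axioms 4,7,10,13); $\exists$ is monotone, giving $\exists R.(C\sqcap D)\subseteq\exists R.C\sqcap\exists R.D$ (Axioms 5,8,11,14); and the mixed law $\exists R.C\sqcap\forall R.D\subseteq\exists R.(C\sqcap D)$ holds because a witnessing successor lying in $C$ must, by the $\forall$-clause, also lie in $D$ (Axioms 6,9,12,15). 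The two remaining round-trip axioms 16 and 17, $\exists r^{-}.\forall r.A\rightarrow A$ and $\exists r.\forall r^{-}.C\rightarrow C$, rely specifically on the inverse-relation identity $(r^{-})^{I}=\{(u,x)\mid(x,u)\in r^{I}\}$ of clause (14): if $u$ has an $r^{-}$-predecessor $x$ all of whose $r$-successors lie in $A$, then taking the successor $u$ itself forces $u\in A^{I}$. I would note that the functionality condition of Definition 2 (each $x$ has a unique successor) is available but, as the above inclusions show, is not actually needed for any of them.

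Finally, the rule axioms 18--21 (modus ponens, $\leftrightarrow$-introduction, transitivity of $\rightarrow$, and $\sqcap$-introduction) follow once $\rightarrow$ is read as subsumption, that is $M\models\phi\rightarrow\psi$ iff $\phi^{I}\subseteq\psi^{I}$, and $\leftrightarrow$ as equality of interpretations. The main obstacle is precisely this reading: clauses (23)--(25) as literally printed bind the implication with $\exists x$ rather than $\forall x$, which would make every implication trivially satisfiable and collapse Axioms 18--21; I would therefore record the intended containment semantics and verify that under it the four rules are immediate from transitivity and intersection of subsets. Beyond that single interpretive point, the remaining difficulty is purely organisational---managing twenty-one cases while consistently tracking the $\Delta^{I}$ versus $\Sigma^{I}$ typing of every concept that occurs.
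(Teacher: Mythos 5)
Your proposal is correct and uses the same underlying method as the paper---direct semantic evaluation via the clauses of Definition~3 followed by set-theoretic reasoning, split into the two cases $\psi^{I}\subseteq\Delta^{I}$ versus $\psi^{I}\subseteq\Sigma^{I}$---but it differs substantially in coverage and organisation. The paper's proof verifies only Axiom~1 explicitly, computing $(\neg\phi\sqcup(\neg\psi\sqcup\phi))^{I}=\Delta^{I}$ (respectively $\Sigma^{I}$), and then asserts that Axioms~2--21 ``are the effective formula by the proof'' without further argument. You instead decompose the twenty-one axioms into structural families (the propositional schemata 1--3; the four parallel quantifier families over $p$, $q$, $r$, $r^{-}$ realising $\exists$-distribution over $\sqcup$, $\exists$-monotonicity over $\sqcap$, and the mixed $\exists/\forall$ law; the inverse round-trips 16--17; and the rules 18--21 read as preservation of validity), which yields a uniform argument per family and actually discharges the cases the paper omits. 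Your observation that the functionality condition in Definition~2 is never needed is accurate. Most valuably, you flag that clauses (23)--(26) of Definition~3 bind the implication with $\exists x$ rather than $\forall x$, which under a literal reading trivialises $M\models\phi\rightarrow\psi$ and would undermine the rule axioms; the paper silently relies on the containment reading $\phi^{I}\subseteq\psi^{I}$ (it is used without comment in the proof of Proposition~\ref{tt1}), so your explicit correction repairs a gap in the source rather than introducing one. In short: same proof idea, but your version is the complete argument the paper only gestures at.
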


We first prove that Axiom 1 $\phi\rightarrow(\psi\rightarrow\phi)$ is valid.\\
\begin{proof}
We only prove its equivalent formula $\neg\phi\sqcup(\neg\psi\sqcup\phi)$ as an effective formula.
For any explanation $\bullet^{I}$, there is two cases as follows.\newline
(1) If the formula for  the object knowledge element concept, then
\begin{eqnarray*}
&\quad&(\neg\phi\sqcup(\neg\psi\sqcup\phi))^{I}\\
&=&(\neg\phi)^{I}\cup(\neg\psi\sqcup\phi)^{I}\\
&=&(\Delta^{I}\setminus\phi^{I})\cup(\neg\psi)^{I}\cup\phi^{I}\\
&=&(\Delta^{I}\setminus\phi^{I})\cup(\Delta^{I}\setminus\psi^{I})\cup\phi^{I}\\
&=&\Delta^{I}\cup(\Delta^{I}\setminus\psi^{I})\\
&=&\Delta^{I}.
\end{eqnarray*}
(2) If the formula for the attribute knowledge element concept, then
\begin{eqnarray*}
&\quad&(\neg\phi\sqcup(\neg\psi\sqcup\phi))^{I}\\
&=&(\neg\phi)^{I}\cup(\neg\psi\sqcup\phi)^{I}\\
&=&(\Sigma^{I}\setminus\phi^{I})\cup(\neg\psi)^{I}\cup\phi^{I}\\
&=&(\Sigma^{I}\setminus\phi^{I})\cup(\Sigma^{I}\setminus\psi^{I})\cup\phi^{I}\\
&=&\Sigma^{I}\cup(\Sigma^{I}\setminus\psi^{I})\\
&=&\Sigma^{I}.
\end{eqnarray*}
Since $(\neg\phi\sqcup(\neg\psi\sqcup\phi))^{I}$ is all valid in the above situation,
we infer $\models\neg\phi\sqcup(\neg\psi\sqcup\phi)$.
It implies $\models\phi\rightarrow(\psi\rightarrow\phi)$.
\end{proof}

Axiom2-Axiom21 are the effective formula by the proof, and the proof process are not listed one by one because of the length limitations of the paper.

Next, we discuss the reliability theorem of the description logic system KEDL, which reveals that the grammar provability can be used to derive the semantic provability.

\begin{prop}\label{tt1}
$\Gamma\models\phi$ and $\Gamma\models\phi\rightarrow\psi$, then $\Gamma\models\psi$.
\end{prop}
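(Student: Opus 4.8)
The plan is to unfold the definition of the semantic consequence relation and reduce the claim to a single-model version of modus ponens. Recall that $\Gamma\models\phi$ means that every model $M$ satisfying all formulas of $\Gamma$ also satisfies $\phi$. Accordingly, I would fix an arbitrary model $M$ with $M\models\Gamma$ and aim to show $M\models\psi$; since $M$ is arbitrary, this delivers $\Gamma\models\psi$.

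Applying the two hypotheses to this fixed $M$ yields $M\models\phi$ and $M\models\phi\rightarrow\psi$. The heart of the argument is then semantic modus ponens inside $M$: from these two facts conclude $M\models\psi$. Here I would invoke the implication clauses of the semantics (Definition 3), splitting into the two cases according to whether $\phi,\psi$ are object knowledge element concepts (interpreted in $\Delta^{I}$) or attribute knowledge element concepts (interpreted in $\Sigma^{I}$), exactly mirroring the case distinction used in the validity proof of Axiom 1.

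In the object-concept case I read $M\models\phi$ as $\phi^{I}=\Delta^{I}$, and $M\models\phi\rightarrow\psi$ (clause (23)) as the pointwise condition that $x\in\phi^{I}$ forces $x\in\psi^{I}$. Combining the two, every element of $\Delta^{I}$ lands in $\psi^{I}$, so $\psi^{I}=\Delta^{I}$, i.e.\ $M\models\psi$. The attribute-concept case is verbatim the same with $\Sigma^{I}$ in place of $\Delta^{I}$ and clause (25) in place of clause (23). Generalizing over the arbitrary $M$ then completes the argument.

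The step I expect to be the main obstacle is pinning down the satisfaction relation $M\models\phi$ for a concept-as-formula and reconciling it with the existential phrasing in clauses (23)/(25): once $M\models\phi$ is read as ``$\phi$ denotes the whole relevant domain'' and the implication clause is taken in its intended pointwise sense, the implication immediately forces $\psi$ to denote the whole domain as well, and the remainder is routine bookkeeping. Notably, no induction on formula structure is required, since the statement concerns only the consequence relation and not the internal build-up of $\phi$ or $\psi$.
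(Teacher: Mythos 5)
Your proof takes essentially the same route as the paper's: unfold the semantic consequence relation, split into the object-concept case over $\Delta^{I}$ and the attribute-concept case over $\Sigma^{I}$, and conclude by semantic modus ponens via the implication clause. If anything, your version is the more carefully executed one --- the paper writes the containment as $\phi^{I}\supseteq\psi^{I}$ and concludes only the trivial $\psi^{I}\subseteq\Delta^{I}$, whereas your reading ($\phi^{I}=\Delta^{I}$ together with the pointwise sense of clauses (23)/(25) forces $\psi^{I}=\Delta^{I}$) is what actually yields $\Gamma\models\psi$.
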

\begin{proof}
$\Gamma\models\phi$ implies for any explanation $\bullet^{I}$. (1) $\phi^{I}\subseteq\Delta^{I}$ is established. Using $\Gamma\models\phi\rightarrow\psi$, we have $\phi^{I}\supseteq\psi^{I}$. Thus    $\psi^{I}\subseteq\Delta^{I}$ is established. (2) $\phi^{I}\subseteq\Sigma^{I}$ is established, we deduce $\Gamma\models\phi\rightarrow\psi$ and $\phi^{I}\supseteq\psi^{I}$. Hence, $\psi^{I}\subseteq\Sigma^{I}$. Therefore, $\Gamma\models\psi$.
\end{proof}

\begin{thm}\label{theorem2}
        (Reliability of KEDL) $\Gamma\vdash\phi\Rightarrow\Gamma\models\phi$.
\end{thm}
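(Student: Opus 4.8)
The plan is to establish reliability (soundness) by induction on the length $n$ of a derivation witnessing $\Gamma\vdash\phi$. Recall that such a derivation is a finite sequence $\phi_1,\dots,\phi_n=\phi$ in which each $\phi_i$ is either a hypothesis drawn from $\Gamma$, a logical axiom of KEDL, or the result of applying one of the inference rules to earlier members of the sequence. I would carry the induction over $n$ and show that $\Gamma\models\phi_i$ holds at every stage $i$, so that in particular $\Gamma\models\phi_n=\phi$, which is the desired conclusion.

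For the base case (a line justified without appeal to earlier lines) there are two possibilities. If $\phi\in\Gamma$, then by the definition of semantic consequence every model $M$ with $M\models\Gamma$ also satisfies $M\models\phi$, so $\Gamma\models\phi$ holds trivially. If instead $\phi$ is one of the logical axioms (Axioms 1--17), then Theorem~\ref{result1} already gives $\models\phi$, i.e.\ $\phi$ is valid under every interpretation $\bullet^{I}$, and hence a fortiori $\Gamma\models\phi$. This is precisely where the bulk of the semantic work resides --- the two-sorted case analysis over $\Delta^{I}$ and $\Sigma^{I}$ illustrated for Axiom~1 --- but it has been discharged in advance by Theorem~\ref{result1}, so nothing further is needed here.

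For the inductive step I would assume $\Gamma\models\phi_j$ for every $j<i$ and treat $\phi_i$ according to the inference rule used to obtain it. The principal rule is modus ponens (Axiom~18): if $\phi_i$ follows from earlier premises $\psi$ and $\psi\rightarrow\phi_i$, then by the induction hypothesis $\Gamma\models\psi$ and $\Gamma\models\psi\rightarrow\phi_i$, and Proposition~\ref{tt1} yields $\Gamma\models\phi_i$ immediately. The remaining rules (Axioms~19, 20 and 21) are handled by the same pattern: for each I would state and verify a short semantic-preservation lemma directly from Definition~3. For Axiom~20 one shows that $\Gamma\models\phi\rightarrow\psi$ and $\Gamma\models\psi\rightarrow\gamma$ force $\Gamma\models\phi\rightarrow\gamma$ using clauses (23)/(25); for Axiom~19 that two one-directional entailments combine into $\Gamma\models\phi\leftrightarrow\psi$ via clauses (24)/(26); and for Axiom~21 the corresponding conjunction behaviour via clauses (6)/(11). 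In each lemma the argument is run separately in the two sorts $\Delta^{I}$ and $\Sigma^{I}$, exactly as in the proofs of Theorem~\ref{result1} and Proposition~\ref{tt1}.

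The main obstacle I anticipate is not the induction itself but verifying that each inference rule genuinely preserves semantic entailment under the paper's nonstandard reading of $\rightarrow$ and $\leftrightarrow$ (clauses (23)--(26) of Definition~3), and doing this uniformly across the object sort $\Delta^{I}$ and the attribute sort $\Sigma^{I}$. Once the preservation lemmas corresponding to Axioms~18--21 are in place --- with Axiom~18 already supplied by Proposition~\ref{tt1} --- the induction closes and delivers $\Gamma\vdash\phi\Rightarrow\Gamma\models\phi$.
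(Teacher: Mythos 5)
Your proposal is correct and follows essentially the same route as the paper: induction on the length of the derivation, with axioms discharged by Theorem~\ref{result1}, hypotheses from $\Gamma$ handled by the definition of semantic consequence, and modus ponens closed by Proposition~\ref{tt1}. The only substantive difference is that you also treat Axioms~19--21 as inference rules requiring their own semantic-preservation lemmas, a case the paper's proof silently omits by taking modus ponens to be the only rule applied in the inductive step.
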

\begin{proof}
Let $\Gamma\vdash\phi$, then there is a proof of $\phi$ from $\Gamma$, that is $\phi_{1}$, $\phi_{2}$, $\cdots$, $\phi_{n}$, and $\phi_{n}=\phi$.

Next, to prove $\Gamma\models\phi$, we induce the length $n$ of the $\phi$'s proof.

When $n=1$, $\phi_{1}=\phi$, there are two cases at this time. (1)$\phi$ is an axiom of description logic KEDL. By theorem 1, we know that $\phi$ is a valid formula. (2) $\phi\in\Gamma$, by the definition of the semantic inference, then  $\Gamma\models\phi$.

If $n>1$, there are three cases. (1) $\phi$ is an axiom of description logic KEDL; (2) $\phi\in\Gamma$. These two cases are the same as the situations when $n=1$, $\Gamma\models\phi$ are all established.

Case(3): $\phi$ is derived by modus ponens, that is, there are $i, j$, and $i, j<n$, then $\phi_{j}=\phi_{i}\rightarrow\phi$. This time, there are $\Gamma\models\phi_{i}$ and $\Gamma\models\phi_{j}$ from $\Gamma\vdash\phi_{i}$ and $\Gamma\vdash\phi_{j}$ by inductive inference. The latter is $\Gamma\models\phi_{i}\rightarrow\phi$. Then $\Gamma\models\phi$ is established by proposition \ref{tt1}.
\end{proof}

\begin{lem}\label{lemma1}
        L(X) of KEDL is a countable set.
\end{lem}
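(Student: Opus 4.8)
The plan is to treat $L(X)$ as the set of all KEDL concepts (the object knowledge element concepts in $\Phi$ together with the attribute knowledge element concepts in $\Omega$) generated according to the syntax of Definition~1 from a set $X$ of primitive symbols, and to argue by the standard ``stratification by complexity'' technique. First I would fix the alphabet. As is customary for description logics, I would assume that the supply of atomic object knowledge element concepts, atomic attribute knowledge element concepts, and the relationship names in $\mathbf{R}$, $\mathbf{P}$, $\mathbf{Q}$ is at most countable; together with the finitely many logical constructors $\top,\bot,\neg,\sqcap,\sqcup,\forall,\exists,\rightarrow,\leftrightarrow$, the inverse marker $(\cdot)^{-}$, the dot, and the parentheses, this yields a countable alphabet $A$. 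Since every KEDL concept is, by Definition~1, a finite string over $A$, and since the set $A^{*}$ of all finite strings over a countable alphabet is countable, it then suffices to observe that $L(X)\subseteq A^{*}$ and that every subset of a countable set is countable.

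If a more constructive argument is preferred, I would stratify $L(X)$ by the number of constructor applications. Set $L_{0}$ to be $X$ together with $\top$ and $\bot$, and let $L_{k+1}$ be $L_{k}$ augmented by every concept obtained by a single application of one of the clauses (1)--(9) of Definition~1 to arguments already lying in $L_{k}$ and to relationship names (of which there are countably many). Each clause forms a new concept from a bounded number of previously built concepts and relationship names, so $L_{k+1}$ arises from $L_{k}$ by finitely many operations of the form ``finite Cartesian product of countable sets,'' each of which is countable; hence, by induction on $k$, every $L_{k}$ is countable. Because each concept uses only finitely many constructors, we have $L(X)=\bigcup_{k\ge 0}L_{k}$, a countable union of countable sets, and is therefore countable.

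The argument presents no genuine obstacle; the only point requiring care is the two-sorted nature of the syntax. Clauses (3) and (6) build attribute knowledge element concepts from relationship names in $\mathbf{R}$ together with (respectively) attribute or object knowledge element concepts, so the induction must construct the concepts of $\Phi$ and of $\Omega$ \emph{simultaneously} within a single stratification rather than treating the two sorts in isolation. Once both sorts are carried along together at every stage $L_{k}$, countability is preserved at each step, and the conclusion that $L(X)$ is a countable set follows immediately.
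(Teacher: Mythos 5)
Your second, constructive argument is essentially the paper's own proof: the paper likewise fixes the countable stock of atomic concepts and relation names as $L_{0}$, builds each $L_{k}$ by applying the finitely many constructors to earlier strata (carrying both the $\Phi$-sort and the $\Omega$-sort together in every $L_{k}$, exactly the point you flag), and concludes that $L(X)=\bigcup_{k=0}^{\infty}L_{k}$ is a countable union of countable sets. Your first argument (every concept is a finite string over a countable alphabet, so $L(X)\subseteq A^{*}$ is countable) is a correct and more elementary alternative that the paper does not use, but it changes nothing essential.
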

\begin{proof}
In description logic KEDL, some constructors, such as $\neg$, $\sqcap$, $\sqcup$, $\forall p.C$, $\exists p.C$, $\forall q.A$, $\exists q.A$, $\forall r.A$, $\exists r.A$, $\forall r^{-}.C$, $\exists r^{-}.C$, are introduced. Supposing the set which is constructed by these KEDL constructors is as follows $X'$ and $X$.
$$X'={ \neg, \sqcap, \sqcup, \forall p.C, \exists p.C, \forall q.A, \exists q.A, \forall r.A, \exists r.A, \forall r^{-}.C, \exists r^{-}.C},$$
$$X={C_{1},C_{2},\cdots} \cup {A_{1},A_{2},\cdots}\cup {p_{1},p_{2},\cdots}\cup {q_{1},q_{2},\cdots}\cup {r_{1},r_{2},\cdots}.$$
Then, construct $L_{0}$, $L_{1}$, $L_{2}$, $\cdots$, from $X'$ and $X$. The Constructed method is as follows.\newline
$$L_{0}={C_{1},C_{2},\cdots}\cup {A_{1},A_{2},\cdots}\cup {p_{1},p_{2},\cdots}\cup {q_{1},q_{2},\cdots}\cup {r_{1},r_{2},\cdots},$$
$L_{1}=\{\neg C_{1}, \neg C_{2}, \cdots, C_{1}\sqcup C_{2}, C_{2}\sqcup C_{1}, \cdots, C_{1}\sqcap C_{2}, C_{2}\sqcap C_{1}, \cdots, \neg A_{1}, \neg A_{2}, \cdots,A_{1} \sqcup A_{2}, A_{2} \sqcup A_{1},\cdots, A_{1}\sqcap A_{2}, A_{2}\sqcap A_{1}, \cdots, \forall p.C_{1}, \forall p.C_{2}, \cdots, \exists p.C_{1}, \exists p.C_{2}, \cdots, \forall q.A_{1}, \forall q.A_{2}, \cdots,\exists q.A_{1},\\
\exists q.A_{2}, \cdots, \forall r.A_{1}, \forall r.A_{2}, \cdots, \exists r.A_{1}, \exists r.A_{2}, \cdots, \forall r^{-}.C_{1}, \forall r^{-}.C_{2}, \cdots,\exists r^{-}.C_{1}, \exists r^{-}.C_{2}, \cdots\},$\\
$L_{k}=\{\neg C_{k-1},\cdots,C_{1}\sqcup C_{k-1},\cdots,C_{1}\sqcap C_{k-1},\cdots,\neg A_{k-1},\cdots,A_{1}\sqcup A_{k-1},\cdots,A_{1}\sqcap A_{k-1},\cdots,$\\
$\forall p.C_{k-1},\cdots,\exists p.C_{k-1},\cdots,\forall q.A_{k-1},\cdots,\exists q.A_{k-1},\cdots,\forall r.A_{k-1},\cdots,\exists r.A_{k-1},\\
\cdots,\forall r^{-}.C_{k-1},\cdots, \exists r^{-}.C_{k-1},\cdots\}$ (where $k>0$),\\
$$L(X)=\bigcup^{\infty}_{k=0}L_{k}.$$
$L(X)$ is a countable set and we provide the desired result.
\end{proof}

\begin{thm}\label{theorem3}
        (Completeness of KEDL) $\Gamma\models\phi\Rightarrow\Gamma\vdash\phi$.
\end{thm}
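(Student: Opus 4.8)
The plan is to prove the contrapositive by a canonical-model (Lindenbaum–Henkin) construction, dual to the reliability direction of Theorem~\ref{theorem2}. Concretely, I would show that if $\Gamma \not\vdash \phi$ then $\Gamma \not\models \phi$, by exhibiting a KEDL model $M$ that satisfies every formula of $\Gamma$ but refutes $\phi$. The starting observation is that $\Gamma \not\vdash \phi$ forces $\Gamma \cup \{\neg\phi\}$ to be syntactically consistent: were it inconsistent, the deduction theorem (which KEDL satisfies) together with the transposition and absurdity laws recorded before Theorem~\ref{result1} would let me derive $\phi$ from $\Gamma$, a contradiction. Thus the whole theorem reduces to the single implication \emph{consistent $\Rightarrow$ satisfiable}.

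Next I would invoke Lemma~\ref{lemma1}. Because $L(X)$ is countable, I can fix an enumeration $\psi_{1}, \psi_{2}, \dots$ of all KEDL formulas and run the usual Lindenbaum construction: set $\Delta_{0} = \Gamma \cup \{\neg\phi\}$ and, at stage $k$, put $\Delta_{k+1} = \Delta_{k} \cup \{\psi_{k}\}$ if this preserves consistency and $\Delta_{k+1} = \Delta_{k} \cup \{\neg\psi_{k}\}$ otherwise. The union $\Gamma^{*} = \bigcup_{k} \Delta_{k}$ is then a maximal consistent set extending $\Gamma \cup \{\neg\phi\}$, and maximality yields the standard closure facts: for each $\psi$ exactly one of $\psi, \neg\psi$ lies in $\Gamma^{*}$; $\Gamma^{*}$ is closed under modus ponens via Axiom~18; and $\psi \sqcap \gamma \in \Gamma^{*}$ iff both conjuncts are, using Axiom~21 and the De~Morgan law (Property~10). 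Countability is precisely what legitimizes this enumeration, which is why Lemma~\ref{lemma1} is established beforehand.

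From $\Gamma^{*}$ I would build the canonical two-sorted model. Since KEDL separates object and attribute concepts over the disjoint domains $\Delta^{I}$ and $\Sigma^{I}$, I would take the object domain to consist of maximal consistent sets of object-concept formulas (or, more economically, equivalence classes of individual names under $\Gamma^{*}$) and the attribute domain of those of attribute-concept formulas, interpreting each relation $p,q,r$ through membership of the corresponding $\exists$-formulas and decreeing $(u,x)\in (r^{-})^{I}$ exactly when $(x,u)\in r^{I}$, so that the semantic clause for the inverse constructor holds by construction. The heart of the argument is then the \emph{truth lemma}: for every formula $\psi$, one has $\psi \in \Gamma^{*}$ iff $M \models \psi$, proved by induction on the build-up of $\psi$ through the syntax clauses. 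The Boolean and implication cases use maximality and Proposition~\ref{tt1}-style reasoning; the quantifier cases $\exists p.C$, $\exists r.A$, $\exists q.A$ and their universal duals require witnessing individuals, and this is where Axioms~4--17 (the distribution axioms and the round-trip Axioms~16 and~17 linking $r$ with $r^{-}$) guarantee that the relational witnesses demanded by the semantics actually exist in the canonical domains. Applying the truth lemma to the members of $\Gamma$ and to $\neg\phi$ gives $M \models \Gamma$ and $M \not\models \phi$, hence $\Gamma \not\models \phi$, as required.

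The main obstacle I anticipate is the quantifier step of the truth lemma under the functionality constraint imposed in the model definition, namely that $\{y : (x,y)\in r^{I}\}$ is a singleton for every $x$. This collapses the existential and universal restrictions on $r$ in a way the canonical interpretation must respect, and it interacts delicately with the inverse constructor and with the round-trip Axioms~16 and~17; arranging that the witness supplied for each $\exists r.A$ is consistent with this single-successor requirement, rather than over-generating $r$-successors, is the part that needs genuine care, whereas the Boolean and implication cases are routine.
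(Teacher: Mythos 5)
Your overall strategy coincides with the paper's: both arguments prove the contrapositive, use Lemma~\ref{lemma1} to enumerate all formulas, run the Lindenbaum construction starting from $\Gamma\cup\{\neg\phi\}$ to obtain a maximal consistent $\Gamma^{\ast}$, and then read off an interpretation refuting $\phi$. Where you diverge is in the last step, and the divergence is substantive. The paper does not build a canonical model at all: it simply \emph{decrees} $\psi^{I}=\Delta^{I}$ (or $\Sigma^{I}$) whenever $\Gamma^{\ast}\vdash\psi$ and $(\neg\psi)^{I}=\Delta^{I}$ otherwise, treating every concept formula as an unanalyzed propositional atom. It never specifies the domains, never interprets the relations $p$, $q$, $r$, $r^{-}$, and never verifies that this valuation is compatible with the compositional semantics of Definition~3 --- that is, the paper contains no truth lemma, and in particular never checks that the value it assigns to $\exists p.C$ agrees with the value computed from $p^{I}$ and $C^{I}$. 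Your proposal, by contrast, constructs a genuine two-sorted canonical model (domains of maximal consistent sets or name classes, relations induced by membership of $\exists$-formulas, $(r^{-})^{I}$ defined as the converse of $r^{I}$) and proves the truth lemma by induction, using Axioms~4--17 for the quantifier cases. The obstacle you flag --- reconciling the existential witnesses with the functionality constraint that $\{y:(x,y)\in r^{I}\}$ be a singleton, and with the round-trip Axioms~16 and~17 --- is exactly the work the paper silently omits. So your route is the more rigorous one; it buys an actual model-theoretic completeness proof where the paper delivers only a propositional-style valuation argument, at the cost of having to carry out the (nontrivial, and in your sketch still unexecuted) quantifier step of the truth lemma.
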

\begin{proof}
We assume $\Gamma\vdash\phi$ is not established. To construct or find an explanation $\bullet^{I}$ which makes all the formulas of $\Gamma$ are $\Delta^{I}$ or $\Sigma^{I}$, which contradicts to $\Gamma\models\phi$ due to $\phi=\emptyset$.

Since $L(X)$ is a countable set, arrange all formulas of KEDL together, assuming it is $\phi_{1}$, $\phi_{2}$, $\cdots$, $\phi_{n}$, $\cdots$. Let $\Gamma_{0}=\Gamma\cup\{\neg\phi\}$.
If $n>0$, let $$\Gamma_{n}=
\begin{cases}
\Gamma_{n-1}& \text{if $\Gamma_{n-1}\vdash\phi_{n-1}$,}\\
\Gamma_{n-1}\cup\{\neg\phi_{n-1}\}& \text{if $\Gamma_{n-1}\vdash\phi_{n-1}$ is invalid.}
\end{cases}$$
Then there is a sequence $\Gamma_{n}$:$\Gamma_{0}\subseteq\Gamma_{1}\subseteq\Gamma_{2}\subseteq\cdots$.\newline
(1) To prove that there is no contradiction to the sequence $\Gamma_{0}$, $\Gamma_{1}$, $\Gamma_{2}$, $\cdots$.
The following is a summary of $n$ to prove that each $\Gamma_{n}$ is no contradiction.

When $k=0$, $\Gamma_{0}\cup{\neg\phi}$ is no contradiction. Otherwise, by $\Gamma_{0}\cup{\neg\phi}\vdash\psi$ and $\neg\psi$, through the law of counter-evidence, $\Gamma\vdash\phi$ is got. But it has been assumed that $\Gamma\vdash\phi$ is not established.

Suppose when $k=n-1$, $\Gamma_{n-1}$ is no contradiction. Next, we consider when $k=n$, to prove that $\Gamma_{n}$ is true. Assuming that $\Gamma_{n}$ is contradictory, then there is $\phi$ and it let there are as follows.\newline
\textcircled{1} $\Gamma_{n}\vdash\phi,\neg\phi$.
In this case, we yield $\Gamma_{n}\neq\Gamma_{n-1}$. By means of the definition of $\Gamma_{n}$, we know the following facts.\newline
\textcircled{2} $\Gamma_{n-1}\vdash\phi_{n-1}$ is invalid.\newline
\textcircled{3} $\Gamma_{n}=\Gamma_{n-1}\cup{\neg\phi_{n-1}}$.\newline
In terms of \textcircled{1}, \textcircled{3}, and the law of counter-evidence, we get $\Gamma_{n-1}\vdash\phi_{n-1}$, which is contradicts to formula \textcircled{2}. Therefore, each $\Gamma_{n}$ has no contradiction.\newline
(2) To construct $\Gamma^{\ast}=\bigcup^{\infty}_{n=0}\Gamma_{n}$, then $\Gamma^{\ast}$ is not contrary.\newline
Assuming that $\Gamma^{\ast}$ is contrary, then according to $\Gamma^{\ast}\vdash\psi,\neg\psi$, we can get conclusion as follows. There is some sufficiently large $n$, $\Gamma_{n}\vdash\psi$ and $\neg\psi$. It's contrary to $\Gamma_{n}$ is not contrary.\newline
(3) $\Gamma^{\ast}$ is complete. That is, for any $\psi$, one of $\Gamma^{\ast}\vdash\psi$ and $\Gamma^{\ast}\vdash\neg\psi$ must be in existence.\newline
Assume $\psi=\phi_{n}$, because $\psi$ must appear in $\phi_{1}$, $\phi_{2}$, $\cdots$. If $\Gamma^{\ast}\vdash\phi_{n}$ is invalid, then $\Gamma_{n}\vdash\phi_{n}$ is invalid. By the definition of $\Gamma_{n}$, there is $\Gamma_{n+1}=\Gamma_{n}\cup\{\neg\phi_{n}\}$. So there is  $\Gamma^{\ast}\vdash\neg\phi_{n}$. This means that for any $n$, one of $\Gamma^{\ast}\vdash\phi_{n}$ and $\Gamma^{\ast}\vdash\neg\phi_{n}$ must be existed.\newline
(4) To construct interpretation $\bullet^{I}$, define a map:
$$\Gamma(\psi)=
\begin{cases}
\psi^{I}=\Delta^{I}& \text{if $\Gamma^{\ast}\vdash\psi$,}\\
(\neg\phi)^{I}=\Delta^{I}& \text{if $\Gamma^{\ast}\vdash\neg\psi$.}
\end{cases}$$
or
$$\Gamma(\psi)=
\begin{cases}
\psi^{I}=\Sigma^{I}& \text{if $\Gamma^{\ast}\vdash\psi$,}\\
(\neg\phi)^{I}=\Sigma^{I}& \text{if $\Gamma^{\ast}\vdash\neg\psi$.}
\end{cases}$$
Since $\Gamma^{\ast}$ is complete and $\Gamma^{\ast}$ is non-contradictory, this definition is reasonable.

For explanation $\bullet^{I}$, $\forall\psi\in\Gamma$, $\psi\in\Gamma\Rightarrow\psi\in\Gamma^{\ast}\Rightarrow\Gamma^{\ast}\vdash\psi$, so $\psi^{I}=\Delta^{I}$(or $\Sigma^{I}$). But $\neg\phi\in\Gamma_{0}\subseteq\Gamma^{\ast}$, then $\Gamma^{\ast}\vdash\neg\phi$, so $(\neg\phi)^{I}=\Delta^{I}$(or $\Sigma^{I}$), that is to say, $\Delta^{I}\setminus\phi^{I}=\Delta^{I}$(or $\Sigma^{I}\setminus\phi^{I}=\Sigma^{I}$), therefore $\phi^{I}=\emptyset$. In this way, in the structured interpretation $\bullet^{I}$, all the formulas are interpreted as $\Delta^{I}$(or $\Sigma^{I}$), but only $\phi^{I}$ is interpreted as $\emptyset$. That is to say, $\Gamma\models\phi$ is not established.
\end{proof}

\section{An example of describing knowledge element by description logic KEDL}
The following example uses the description logic KEDL to describe the knowledge elements which are in \cite{30}. The knowledge element are as follows.\newline
$\bullet$\ Gas $\{$gas composition, ignition point, temperature, gas concentration, gas volume$\}$.\newline
$\bullet$\ Fire-source$\{$location, fire-source category, fire-source temperature$\}$.\newline
$\bullet$\ Gas-explosion$\{$time, location, gas concentration, fire source category, explosion impact, explosion energy$\}$.\newline
$\bullet$\ Tunnel$\{$location, length, width, height, anti-explosive impact, explosion impact$\}$.\newline

The above knowledge element can be described by description logic KEDL as follows.\\
$\bullet$\ The object knowledge element concept include gas, fire-source, gas explosion, roadway.\newline
$\bullet$\ The attribute knowledge element concept include gas composition, ignition point, temperature, gas concentration, gas volume, location, fire source category, fire source temperature, time, explosion impact force, explosion energy, length, width, height, anti-explosion impact force.\newline
$\bullet$\ The relationship between object knowledge element and attribute knowledge element include has-composite, has-fire spot, has-temperature, has-gas density, has-gas amount, has-location, has-fire kind, has-fire temperature, has-time, has-blast impact power, has-blast energy, has-length, has-width, has-height, has-disblast impact power.\newline

Description logic KEDL is used to describe the knowledge elements as follows.\newline
$\bullet$\ Gas$=\exists$has-composite. Gas composition $\sqcap$ has-fire spot. Fire point $\sqcap$ has-temperature.temperature $\sqcap$ has- gas density. Gas concentration $\sqcap$ has-gasamount. Gas volume.\newline
$\bullet$\ Fire-source$=\exists$has-location. Location $\sqcap$ has fire kind. Fire source category $\sqcap$ has-fire temperature. Source temperature.\newline
$\bullet$\ Gas explosion$=\exists$has-time. Time $\sqcap$ has-location. Location $\sqcap$ has-gas density. Gas concentration $\sqcap$ has fire kind. Fire source category $\sqcap$ has-blast impact power. Explosive impact $\sqcap$ has-blast energy. Explosive energy.\newline
$\bullet$\ Tunnel$=\exists$has-location. Location $\sqcap$ has-length. Length $\sqcap$ has-width. Width $\sqcap$ has-height. Height $\sqcap$ has-disblast impact power. Anti-explosive impact $\sqcap$ has-blast impact power. Explosive impact.

Then the knowledge element described by description logic KEDL is edited and inferred by prot¨¦g¨¦ based on Reasonpro1.9 (a reasoner based on description logic) and displayed by graphical plug-in. And the implicit relationship is displayed where describing the knowledge element, which is shown in figure 1.
\begin{figure}[h]
\centering
\includegraphics[width=.8\textwidth]{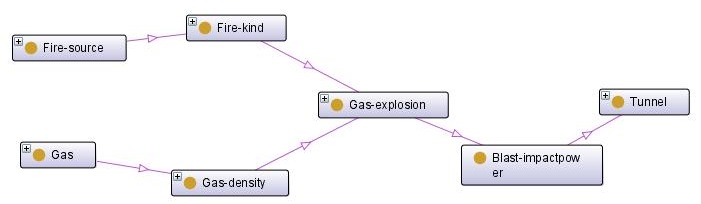}
\caption{KEDL showing the recessive relation between knowledge element}
\end{figure}

In addition, the tunnel with a length greater than 1200 meters cannot be expressed under the traditional description logic ALC, and it can be expressed as has-length.(more-than. {1200 meters}) under the description logic KEDL, where has-length indicates that the tunnel has the length attribute, {1200 meters} is a concept consisting of the instances whose length are 1200 meters.¡¢

From the above example, the description logic KEDL has a stronger description capability than the traditional description logic ALC. It can clearly express the two kinds of knowledge element and the implicit relationship among the knowledge elements.

\section{Conclusion}
In view of the fact that the existing knowledge element representation method is limited in functions to deal with the formality, logic and reasoning. Based on description logic ALC and the common knowledge element model, in order to describe the knowledge element, this paper extended description logic ALC. The concept is extended to two (that is, object knowledge element concept and attribute knowledge element concept),and the relationship is extended to three (that is, the relationship between object knowledge element concept and attribute knowledge element concept, the relationship among object knowledge element concepts, and the relationship among attribute knowledge element concepts), and the inverse relationship constructor is added to propose a description logic KEDL system. The description logic system KEDEL is proved to have reliability, completeness, etc. Finally, It is verified by an example that the description logic KEDL system has strong knowledge element description ability. In addition, logical reasoning based on the description logic KEDL is a problem to be studied in the future.\newline

\section*{Acknowledgments}  The research is supported by National Nature Science Fund Project (No. 61661051 and 61562093), Key Project of Applied Basic Research Program of Yunnan Province (No. 2016FA024).

\end{document}